\newcommand{\eps}{\epsilon}
\newtheorem{theorem}{Theorem}
\newtheorem{lemma}{Lemma}
\newtheorem{definition}{Definition}}
\newenvironment{proof}{\trivlist\item[]\emph{Proof}:}%
{\unskip\nobreak\hskip 1em plus 1fil\nobreak$\Box$
\parfillskip=0pt%
\endtrivlist}
\newcommand{\A}{\mathcal{A}}
\newcommand{\M}{\mathcal{M}}
\newcommand{\Q}{\mathcal{Q}}
\newcommand{\B}{\mathcal{B}}
\renewcommand{\S}{\mathcal{S}}
\newcommand{\nm}{\frac{N}{M}}
\renewcommand{\paragraph}[1]{\vspace{2mm}\noindent{\bf #1}}
\title{\bf Dynamic Indexability: The Query-Update Tradeoff for
  One-Dimensional Range Queries}
\author{Ke Yi}
\date{Department of Computer Science and Engineering \\
Hong Kong University of Science and Technology \\
Hong Kong, China 
}
\begin{document}

\maketitle

\begin{abstract}
  The {\em B-tree} is a fundamental secondary index structure that is
  widely used for answering one-dimensional range reporting queries.  Given
  a set of $N$ keys, a range query can be answered in $O(\log_B \nm +
  \frac{K}{B})$ I/Os, where $B$ is the disk block size, $K$ the output
  size, and $M$ the size of the main memory buffer.  When keys are inserted
  or deleted, the B-tree is updated in $O(\log_B N)$ I/Os, if we require
  the resulting changes to be committed to disk right away.  Otherwise, the
  memory buffer can be used to buffer the recent updates, and changes can
  be written to disk in batches, which significantly lowers the amortized
  update cost.  A systematic way of batching up updates is to use the
  {\em logarithmic method}, combined with {\em fractional cascading},
  resulting in a dynamic B-tree that supports insertions in
  $O(\frac{1}{B}\log\nm)$ I/Os and queries in $O(\log\nm + \frac{K}{B})$
  I/Os.  Such bounds have also been matched by several known dynamic B-tree
  variants in the database literature.  Note that, however, the query cost
  of these dynamic B-trees is substantially worse than the $O(\log_B\nm +
  \frac{K}{B})$ bound of the static B-tree by a factor of $\Theta(\log B)$.
  
  In this paper, we prove that for any dynamic one-dimensional range query
  index structure with query cost $O(q+\frac{K}{B})$ and amortized
  insertion cost $O(u/B)$, the tradeoff $q\cdot \log(u/q) = \Omega(\log B)$
  must hold if $q=O(\log B)$.  For most reasonable values of the
  parameters, we have $\nm = B^{O(1)}$, in which case our query-insertion
  tradeoff implies that the bounds mentioned above are already optimal.  We
  also prove a lower bound of $u \cdot \log q = \Omega(\log B)$, which is
  relevant for larger values of $q$.  Our lower bounds hold in a dynamic
  version of the {\em indexability model}, which is of independent
  interests.  Dynamic indexability is a clean yet powerful model for
  studying dynamic indexing problems, and can potentially lead to more
  interesting complexity results.
\end{abstract}

%\newpage
%\addtocounter{page}{-1}

\section{Introduction}

The {\em B-tree} \cite{bayer:organization} is a fundamental secondary index
structure used in nearly all database systems.  It has both very good space
utilization and query performance: Assuming each disk block can store $B$
data records, the B-tree occupies $O(\frac{N}{B})$ disk blocks for $N$ data
records, and supports one-dimensional range reporting queries in $O(\log_B
N + \frac{K}{B})$ I/Os (or page accesses) where $K$ is the output size.
Due to the large fanout of the B-tree, for most practical values of $N$ and
$B$, the B-tree is very shallow and $\log_B N$ is essentially a constant.
Very often we also have a memory buffer of size $M$, which can be used to
store the top $\Theta(\log_B M)$ levels of the B-tree, further lowering the
effective height of the B-tree to $O(\log_B \frac{N}{M})$, meaning that we
can usually get to the desired leaf with merely one or two I/Os, and then
start pulling out results.
%%   Because
%% retrieving $K$ results needs at least $\lceil K/B \rceil$ I/Os, there is
%% not much room for improvement.  Here the memory buffer does not help since
%% the query could always ask for something that is not currently in the
%% buffer.

If one wants to update the B-tree directly on disk, it is also well known
that it takes $O(\log_B N)$ I/Os.  Things become much more interesting if
we make use of the main memory buffer to collect a number of updates and
then perform the updates in batches, lowering the amortized update cost
significantly.  For now let us focus on insertions only; deletions are in
general much less frequent than insertions, and there are some generic
methods for dealing with deletions by converting them into insertions of
``delete signals'' \cite{arge:buffer,o'neil96:_lsm}.  The idea of using a
buffer space to batch up insertions has been well exploited in the
literature, especially for the purpose of managing historical data, where
there are much more insertions than queries.  The {\em LSM-tree}
\cite{o'neil96:_lsm} was the first along this line of research, by applying
the {\em logarithmic method} \cite{bs-dsp1s-80} to the B-tree.  Fix a
parameter $2\le \ell \le B$.  It builds a collection of B-trees of sizes up
to $M, \ell M, \ell ^2M, \dots$, respectively, where the first one always
resides in memory.  An insertion always goes to the memory-resident tree;
if the first $i$ trees are full, they are merged together with the
$(i+1)$-th tree by rebuilding.  Standard analysis shows that the amortized
insertion cost is $O(\frac{\ell}{B} \log_\ell \frac{N}{M})$.  A query takes
$O(\log_B N \log_\ell\frac{N}{M} + \frac{K}{B})$ I/Os since
$O(\log_\ell\frac{N}{M})$ trees need to be queried.  Using {\em fractional
  cascading} \cite{chazelle:fractionalI}, the query cost can be improved to
$O(\log_\ell\frac{N}{M} + \frac{K}{B})$ without affecting the (asymptotic)
size of the index and the update cost, but this result appears to be
folklore.  Later Jermaine et al. \cite{jermaine99:_novel} proposed the {\em
  Y-tree} as ``yet'' another B-tree structure for the purpose of lowering
the insertion cost.  The Y-tree is an $\ell$-ary tree, where each internal
node is associated with a bucket storing all the elements to be pushed down
to its subtree.  The bucket is emptied only when it has accumulated
$\Omega(B)$ elements.  Although \cite{jermaine99:_novel} did not give a
rigorous analysis, it is not difficult to derive that its insertion cost is
$O(\frac{\ell}{B} \log_\ell \frac{N}{M})$ and query cost
$O(\log_\ell\frac{N}{M} + \frac{K}{B})$, namely, the same as those of the
LSM-tree with fractional cascading.  Around the same time Buchsbaum et
al.~\cite{buschbaum:dfsdir} independently proposed the {\em buffered
  repository tree} in a different context, with similar ideas
and the same bounds as the Y-tree.  In order to support even faster
insertions, Jagadish et al.~\cite{jagadish97:_increm} proposed the {\em
  stepped merge tree}, a variant of the LSM-tree.  At each level, instead
of keeping one tree of size $\ell^i M$, they keep up to $\ell$ individual
trees.  When there are $\ell$ level-$i$ trees, they are merged to form a
level-$(i+1)$ tree.  The stepped merge tree has an insertion cost of
$O(\frac{1}{B}\log_\ell\frac{N}{M})$, lower than that of the LSM-tree.  But
the query cost is a lot worse, reaching $O(\ell\log_B N
\log_\ell\frac{N}{M} + \frac{K}{B})$ I/Os since $\ell$ trees need to be
queried at each level.  Again the query cost can be improved to $O(\ell
\log_\ell\frac{N}{M} + \frac{K}{B})$ using fractional cascading.  The
current best known results are summarized in Table~\ref{tab:results}.
Typically $\ell$ is set to be a constant
\cite{o'neil96:_lsm,jermaine99:_novel,jagadish97:_increm}, at which point
all the indexes have the same asymptotic performance of
$O(\log\frac{N}{M}+\frac{K}{B})$ query and $O(\frac{1}{B}\log\frac{N}{M})$
insertion.  Note that the amortized insertion bound of these dynamic B-trees
could be much smaller than one I/O, hence much faster than updating the
B-tree directly on disk.  The query cost is, however, substantially worse
than the $O(\log_B \nm)$ query cost of the static B-tree by an $\Theta(\log
B)$ factor.  As typical values of $B$ range from hundreds to thousands, we
are expecting a 10-fold degradation in query performance for these dynamic
B-trees.  Thus the obvious question is, can we lower the query cost while
still allowing for fast insertions?

\begin{table}[h]
\centering
\begin{tabular}{r|c|c}
& query & insertion \\
\hline
LSM-tree \cite{o'neil96:_lsm} with fractional cascading & &\\
Y-tree \cite{jermaine99:_novel} & $\log_\ell\frac{N}{M} + \frac{K}{B}$ & $
\frac{\ell}{B} 
\log_\ell \frac{N}{M}$ \\
buffer repository tree \cite{buschbaum:dfsdir} && \\
\hline
stepped merge tree \cite{jagadish97:_increm} with fractional cascading  &
$\ell \log_\ell\frac{N}{M} + \frac{K}{B}$ & $ \frac{1}{B} \log_\ell
\frac{N}{M}$ \\  \hline 
\end{tabular}
\caption{Query/insertion upper bounds of previously known B-tree
  indexes, for a parameter $2\le \ell\le B$. }
\label{tab:results}
\end{table}

In particular, the indexes listed in Table~\ref{tab:results} are all quite
practical, so one may wonder if there are some fancy complicated
theoretical structures with better bounds that have not been found yet.
For the static range query problem, it turned out to be indeed the case.  A
somehow surprising result by Alstrup et al.~\cite{alstrup:_optim} shows
that it is possible to achieve linear size and $O(K)$ query time in the RAM
model.  This results also carries over to 
external memory, yielding a disk-based index with $O(\frac{N}{B})$ blocks
and $O(1+ \frac{K}{B})$-I/O query cost.  However, this structure is overly
complicated, and is actually worse than the B-tree in practice.  In the
dynamic case, a recent result by Mortensen et al.~\cite{mortensen05:_dynam}
gives a RAM-structure with $O(\log\log\log N + K)$ query time and
$O(\log\log N)$ update time.  This result, when carried over to external
memory, gives us an update cost of $O(\log\log N)$ I/Os.  This could be
much worse than the $O(\frac{1}{B}\log\frac{N}{M})$ bound obtained by the
simple dynamic B-trees mentioned earlier, for typical values of
$N,M$, and $B$.  Until today no bounds better than the ones in
Table~\ref{tab:results} are known.  The  $O(\log\frac{N}{M}+\frac{K}{B})$
query and $O(\frac{1}{B}\log\frac{N}{M})$ 
insertion bounds seem to be an inherent barrier that has been standing
since 1996.  Nobody can break one without sacrificing the other.

Lower bounds for this and related problems have also been sought for.  For
lower bounds we will only consider insertions; the results will also hold
for the more general case where insertions and deletions are both present.
A closely related problem to range queries is the {\em predecessor}
problem, in which the index stores a set of keys, and the query asks
for the preceding key for a query point.  The predecessor problem has been
extensively studied in various internal memory models, and the bounds are
now tight in almost all cases \cite{beame02:_optim}.  
%% In the comparison
%% model, the best solution for both range reporting and the predecessor
%% problems is a binary tree.  In more powerful models such as the RAM,
%% faster query and update times have been obtained for range reporting in
%% both the static \cite{alstrup:_optim} and the dynamic case
%% \cite{mortensen05:_dynam}, which are known to be impossible for the
%% predecessor problem to achieve \cite{beame02:_optim}.  However, no lower
%% bounds are known for dynamic range reporting in internal memory models.
In external memory, Brodal and Fagerberg \cite{brodal03:_lower} prove that
for the dynamic predecessor problem, if insertions are handled in
$O(\frac{1}{B}\log\nm)$ I/Os amortized, a predecessor query has to take
$\Omega(\frac{\log(N/M)}{\log\log(N/M)})$ I/Os in the worst case.  Their
lower bound model is a comparison based external memory model.  However, a
closer look at their proof reveals that their techniques can actually be
adapted to prove the same lower bound of $\Omega(\frac{\log(N/M)}{\log\log
  (N/M)}+\frac{K}{B})$ for range queries for any $B=\omega(1)$.
More precisely, we can use their techniques to get the following tradeoff:
If an insertion takes amortized $u/B$ I/Os and a query takes worst-case
$q+O(\frac{K}{B})$ I/Os, then we have
\begin{equation}
\label{eq:previous}
q\cdot\log(u\log^2\tfrac{N}{M}) =\Omega(\log\tfrac{N}{M}),
\end{equation}
provided $u\le B/\log^3 N$ and $N\ge M^2$.  
%% The proof also does not depend
%% on the assumption of only using comparisons.  This lower bound holds in a
%% more powerful model, which we will discuss in details shortly.  
In addition to \eqref{eq:previous}, a few other tradeoffs have also been
obtained in \cite{brodal03:_lower} for the predecessor problem, but their
proofs cannot be made to work for range queries.  For the most
interesting case when we require $q = O(\log\nm)$, \eqref{eq:previous}
gives a meaningless bound of $u=\Omega(1/\log^2\nm)$, as $u\ge 1$
trivially.  In the other direction, if $u =O(\log\nm)$, the tradeoff
\eqref{eq:previous} still leaves an $\Theta(\log\log \nm)$ gap to the known
upper bound for $q$.

\paragraph{Our results.}
In this paper, we prove a query-insertion tradeoff of 
\begin{equation}
\label{eq:1}
\left\{
\begin{array}{ll}
q \cdot \log(u/q) =\Omega(\log B), & \textrm{for } q < \alpha \ln B,
\textrm{where } \alpha \textrm{ is any constant}; \\
u \cdot \log q = \Omega(\log B), & \textrm{for all } q.
\end{array}\right.
\end{equation}
for any dynamic range query index with a query cost of $q +
O(K/B)$ and an amortized insertion cost of $u/B$, provided $N \ge 2MB^2$.
For most reasonable values of $N, M$, and $B$, we may assume that $\nm =
B^{O(1)}$, or equivalently that the B-tree built on $N$ keys has constant
height.  In this
case if we require $q=O(\log\nm) = O(\log B)$, the first branch of
\eqref{eq:1} gives $u=\Omega(\log B)$, matching the known upper bounds in
Table~\ref{tab:results}.  In the other direction, if $u=O(\log\nm) = O(\log
B)$, we have $q=\Omega(\log B) = \Omega(\log\nm)$, which is again tight,
and closes the $\Theta(\log\log \nm)$ gap left in \cite{brodal03:_lower}.
In fact for any $2\le \ell \le B$, if $u=O(\ell \log_\ell B)$, we have a
tight lower bound $q=\Omega(\log_\ell B)$, matching the bounds in the first
row of Table~\ref{tab:results}.  The second branch of \eqref{eq:1} is
relevant for larger values of $q$, for which the previous tradeoff
\eqref{eq:previous} is helpless.  In particular, if $u=O(\log_B \nm) =
O(1)$, we have $q = B^{\Omega(1)}$.  This means that if we want to support
very fast insertions, the query cost has to go from logarithmic to
polynomial, an exponential blowup.  This matches the second row of
Table~\ref{tab:results}.  Our results show that all the indexes listed in
Table~\ref{tab:results}, which are all quite simple and practical, are
already essentially the best one can hope for.

%% Nevertheless, for exceptionally large $N$ such that $\nm = B^{\omega(1)}$,
%% our tradeoff is not tight.

%% Interestingly, a query-update tradeoff of exactly the same form, except
%% that $B$ is replaced by $N$, has been recently obtained by P\v{a}tra\c{s}cu
%% and Demaine \cite{patrascu06:cell} for a number of important internal
%% memory dynamic data structure problems.  These results broke a
%% long-standing barrier of $\Omega(\log N /\log\log N)$ for these problems.
%% Here we are trying to break a similar bound of
%% $\Omega(\log\nm/\log\log\nm)$ in external memory.  Note that, however, the
%% lower bounding techniques in \cite{patrascu06:cell} are fundamentally
%% inapplicable to our problem.  In \cite{patrascu06:cell}, a hard sequence of
%% mixed queries and updates are constructed, and we try to lower bound the
%% total cost of all these operations.  In external memory, if the entire
%% sequence of operations is given to the index structure, we can use the {\em
%%   buffer tree} \cite{arge:buffer} to perform all the operations in a
%% batched way with only $O(\frac{N}{B} \log_{\nm}\frac{N}{B})$ I/Os, breaking
%% any lower bound we might hope for.  Thus, in order to get a meaningful
%% lower bound, one needs to exploit the fact that after any number of
%% insertions, the index needs to be in a state such that it is ready to
%% answer any query right away in the desired bound.

More interestingly, our lower bounds hold in a dynamic version of the {\em
  indexability model} \cite{hellerstein02:indexability}, which was
originally proposed by Hellerstein, Koutsoupias, and Papadimitriou
\cite{hellerstein:analysis}.  To date, nearly all the known lower bounds
  for indexing problems are proved in this model~\cite{hellerstein02:indexability,arge:indexability,arge:enclosure,samoladas:lower,koutsoupias:tight}.
It is in some sense the strongest possible model for reporting problems.
It basically assumes that the query cost is only determined by the number
of disk blocks that hold the actual query results, and ignores all the
search cost that we need to pay to find these blocks.  Consequently, lower
bounds obtained in this model are also stronger than those obtained in
other models.  We will give more details on this model in
Section~\ref{sec:dynamic-indexability}.  However, until today this model
has been used exclusively for studying static indexing problems and only in
two or higher dimensions.  In one dimension, the model yields trivial
bounds (see Section~\ref{sec:dynamic-indexability} for details).  In the
JACM article \cite{hellerstein02:indexability} that summarizes most of the
results on indexability, the authors state: ``However, our model also
ignores the dynamic aspect of the problem, that is, the cost of insertion
and deletion.  Its consideration {\em could} be a source of added
complexity, and in a more general model the source of more powerful lower
bounds.''  In this respect, another contribution of this paper is to add
dynamization to the model of indexability, making it more powerful and
complete.  In particular, our lower bound results suggest that, although
static indexability is only effective in two or more dimensions,
dynamization makes it a suitable model for one-dimensional indexing
problems as well.

\section{Dynamic Indexability}
\label{sec:dynamic-indexability}
\paragraph{Static indexability.}
We first briefly review the framework of indexability before introducing
its dynamization.  We follow the notations from
\cite{hellerstein02:indexability}.  A {\em workload} $W$ is a tuple
$W=(D,I,\Q)$ where $D$ is a possibly infinite set (the {\em domain}),
$I\subseteq D$ is a finite subset of $D$ (the {\em instance}), and $\Q$ is
a set of subsets of $I$ (the {\em query set}).  For example, for
one-dimensional range queries, $D$ is the real line, $I$ is a set of points
on the line, and $\Q$ consists of all the contiguous subsets of $I$.  We
usually use $N=|I|$ to denote the number of objects in the instance.  An
{\em indexing scheme} $\S=(W,\B)$ consists of a workload $W$ and a set $\B$
of $B$-subsets of $I$ such that $\B$ covers $I$.  The $B$-subsets of $\B$
model the data blocks of an index structure, while any auxiliary structures
connecting these data blocks (such as pointers, splitting elements) are
ignored from this framework.  The size of the indexing scheme is $|\B|$,
the number of blocks.  In \cite{hellerstein02:indexability}, an equivalent
parameter, the {\em redundancy} $r=B|\B|/N$ is used to measure the space
complexity of the indexing scheme.  The cost of a query $q\in \Q$ is the
minimum number of blocks whose union covers $q$.  Note that here we have
implicitly assumed that the query algorithm can find these blocks to cover
$q$ instantly with no cost, essentially ignoring the ``search cost''.  The
{\em access overhead} $A$ is the minimum $A$ such that any query $q\in\Q$
has a cost at most $A\cdot \lceil |q|/B \rceil$.  Note that $\lceil
|q|/B\rceil$ is the minimum number of blocks to report the objects in $q$,
so the access overhead $A$ measures how much more we need to access the
blocks in order to retrieve $q$.  For some problems using a single
parameter for the access overhead is not expressive enough, and we may
split it into two: one that depends on $|q|$ and another that does not.
More precisely, an indexing scheme with access overhead $(A_0, A_1)$ must
answer any query $q\in\Q$ with cost at most $A_0+A_1\cdot \lceil |q|/B
\rceil$ \cite{arge:enclosure}.  We can see that the indexability model is
very strong.  It is the strongest possible model that one can conceive for
reporting problems.  It is generally accepted that no index structure could
break indexability lower bounds, unless it somehow ``creates'' objects
without accessing the original ones or their copies.

Except for some trivial facts, all the lower bound results obtained under
this model are expressed as a tradeoff between $r$ and $A$ (or
$(A_0,A_1)$).  For example, two-dimensional range reporting has a tradeoff
of $r=\Omega(\log(N/B)/\log A)$
\cite{arge:indexability,hellerstein02:indexability}; for the {\em point
  enclosure} problem, the dual of range queries, we have the tradeoff
$A_0A_1^2 = \Omega(\log(N/B)/\log r)$ \cite{arge:enclosure}.  These results
show that, even if we ignore the search cost, we can obtain nontrivial
lower bounds for these problems.  These lower bounds have also been matched
with corresponding indexes that {\em do} include the search cost for
typical values of $r$ and $A$ \cite{arge:indexability,arge:enclosure}.
This means that the inherent difficulty for these indexing problems roots
from how we should {\em layout} the data objects on disk, not the search
structure on top of them.  By ignoring the search component of an index, we
obtain a simple and clean model, which is still powerful enough to reveal
the inherent complexity of indexing.  It should be commented that the
indexability model is very similar in spirit to the {\em cell probe model}
of Yao \cite{yao81:_shoul}, which has been successfully used to derive many
internal memory lower bounds.  But the two models are also different in
some fundamental ways; please see \cite{hellerstein02:indexability} for a
discussion.

Nevertheless, although the indexability model is appropriate for
two-dimensional problems, it seems to be overly strong for the more basic
one-dimensional range query problem.  In one dimension, we could simply
layout all the points in order sequentially on disk, which would give us a
linear-size, constant-query access overhead index!  This breaks the
$\Theta(\log_B N)$ bound of the good old B-tree, and suggests that the
indexability model may be too strong for studying one-dimensional
workloads.  This in fact can be explained.  The $\Omega(\log_B N)$ lower
bound holds only in some restrictive models, such as the comparison model,
and the B-tree indeed only uses comparisons to guide its search.  As we
mentioned in the introduction, if we are given more computational power
(such as direct addressing), we can actually solve the static 1D range
query problem with an index of linear size and $O(\lceil K/B
\rceil)$-I/O query cost \cite{alstrup:_optim}.  This means that the search
cost for 1D range queries can still be ignored without changing the
complexity of the problem, and the indexability model is still appropriate,
albeit it only gives a trivial lower bound.

\paragraph{Dynamic indexability.}
In the dynamic case, the domain $D$ remains static, but the instance set
$I$ could change.  Correspondingly, the query set $\Q$ changes and the
index also updates its blocks $\B$ to cope with the changes in $I$.  In the
static model, there is no component to model the main memory, which is all
right since the memory does not help reduce the worst-case query cost 
anyway.  However, in the dynamic case, the main memory does improve the
(amortized) update cost significantly by buffering the recent updates.  So
we have to include a main memory component in the indexing scheme.  More
precisely, the workload $W$ is defined as before, but an indexing scheme is
now defined as $\S=(W, \B, \M)$ where $\M$ is a subset of $I$ with size at
most $M$ such that the blocks of $\B$ together with $\M$ cover $I$.  The
redundancy $r$ is defined as before, but the access overhead $A$ is now
defined as the minimum $A$ such that any $q\in\Q$ can be covered by $\M$
and at most $A\cdot \lceil |q|/B \rceil$ blocks from $\B$.

We now define the {\em dynamic indexing scheme}.  Here we only consider
insertions; deletions can be incorporated similarly.  We first define the
{\em dynamic workload}.
\begin{definition}
  A {\em dynamic workload} $\mathbb{W}$ is a sequence of $N$ workloads
  $W_1=(D,I_1,\Q_1), \dots, W_N=(D,I_2,\Q_2)$ such that $|I_i| = i$ and
  $I_i \subset I_{i+1}$ for $i=1,\dots,N-1$.
\end{definition}
Essentially, we insert $N$ objects into $I$ one by one, resulting in a
sequence of workloads.  Meanwhile, the query set $\Q$ changes according
to the problem at hand.
\begin{definition}
  For a given dynamic workload $\mathbb{W} = (W_1,\dots, W_N)$, a {\em
    dynamic indexing scheme} $\mathbb{S}$ is a sequence of $N$ indexing
  schemes $\S_1 =(W_1, \B_1,\M_1), \dots, \S_N=(W_N,\B_N, \M_N)$.  Each
  $\S_i$ is called a {\em snapshot} of $\mathbb{S}$.  $\mathbb{S}$ has
  redundancy $r$ and access overhead $A$ if for all $i$, $\S_i$ has
  redundancy at most $r$ and access overhead at most $A$.
\end{definition}
A third parameter $u$, the {\em update cost}, is defined as follows.
\begin{definition}
  Given a dynamic indexing scheme $\mathbb{S}$, the {\em transition cost}
  from $\S_i$ to $\S_{i+1}$ is $|\B_i - \B_{i+1}| + |\B_{i+1} - \B_i|$,
  i.e., the number of blocks that are different in $\B_i$ and $\B_{i+1}$.
  The {\em update cost} $\mathbb{S}$ is the $u$ such that the sum of all
  the transition costs for all $1\le i\le N-1$ is $u\cdot N/B$.
\end{definition}
Note that the update cost as defined above is the amortized cost for
handling $B$ updates.  This is mainly for convenience so that
$u$ is always at least 1.

Our definition of the dynamic indexability model continues the same spirit
as in the static case: We will only focus on the cost associated with the
changes in the blocks holding the actual data objects, while ignoring the
search cost of how to find these blocks to be changed.  Under this
framework, the main result obtained in this paper is the following
tradeoff between $u$ and $A$.
\begin{theorem}
\label{thm:dynam-index}
Let $\mathbb{S}$ be any dynamic indexing scheme for dynamic one-dimensional
range queries with access overhead $A$ and update cost $u$.  Provided $N\ge
2MB^2$, we have
\[
\left\{
\begin{array}{ll}
A \cdot \log(u/A) =\Omega(\log B), & \textrm{for }  A < \alpha \ln B,
\textrm{where } \alpha \textrm{ is any constant}; \\
u \cdot \log A = \Omega(\log B), & \textrm{for all } A.
\end{array}\right.
\]
\end{theorem}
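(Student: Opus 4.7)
The plan is to work directly in the dynamic indexability model of Section~\ref{sec:dynamic-indexability}, constructing an adversarial insertion sequence and a family of range queries, and playing the access-overhead parameter $A$ against the update-cost parameter $u$ via a counting argument.

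I would take as the workload a uniformly random permutation of $\{1,\dots,N\}$ inserted in order, and consider the canonical family of $N/B$ disjoint test queries of size $B$ issued at the final snapshot $\S_N$. Writing $c_b$ for the number of test queries whose value range intersects block $b\in\B_N$, the access-overhead guarantee $\mathrm{cost}(Q)\le A\lceil |Q|/B\rceil$ yields
\[
\sum_{b\in\B_N} c_b \;\le\; A\cdot\tfrac{N}{B} \;+\; O\!\left(\tfrac{N}{B}\right),
\]
where the error term absorbs the contribution of the buffer $\M_N$ using $N\ge 2MB^2$. The task reduces to forcing $\sum_b c_b$ to be large.

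To lower-bound $\sum_b c_b$ I would exploit the total update budget. Each block-creation event in the history places at most $B$ elements into a new block, and the total number of such events is at most $uN/B$. A ``purely time-clustered'' block (the natural output of an update-free scheme on a random permutation) has $c_b=\Omega(B)$ with high probability, giving a baseline $\sum_b c_b=\Omega(N)$; a single creation can at best collapse $B$ spread contributions into one clustered contribution, so a single-scale argument gives only the weak bound $A+u=\Omega(B)$. To recover the $\log B$ factor, I would iterate the counting argument over $\Theta(\log B)$ geometrically nested value scales (queries of sizes $B, 2B, 4B,\dots$), and distribute the creation budget across these scales. At each scale we get a version of the inequality above, and optimizing the distribution of the $uN/B$ creations against the $\log B$ scale-constraints yields $A\log(u/A)=\Omega(\log B)$ for small $A$ (first branch), while swapping the roles of scales and creations---i.e., viewing a block of $A$ logical partitions as a single hierarchical object---gives $u\log A=\Omega(\log B)$ for large $A$ (second branch).

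The main obstacle is precisely this multi-scale charging: a block that is value-clustered at a fine scale is automatically clustered at every coarser scale, so a naive accounting would let a single creation kill all scales at once. One must argue that, because aligning a creation to the finest scale requires waiting until all $B$ consecutive-value elements have arrived (which, for a random permutation, happens only near the end of the insertion sequence), the data structure cannot cheaply free-ride across scales and the creation budget truly has to scale with $\log B$. Threading the buffer $\M$ and the hypothesis $N\ge 2MB^2$ cleanly through every scale, so that buffer savings contribute only to lower-order terms uniformly, is the remaining technical nuisance.
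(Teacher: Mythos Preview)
Your plan is quite different from the paper's and has a real gap at its first step. With $c_b$ defined as the number of test queries whose range \emph{intersects} $b$, the bound $\sum_{b\in\B_N} c_b \le A\cdot N/B + O(N/B)$ does not follow from the access-overhead hypothesis: that hypothesis only says each query admits \emph{some} cover by $\le A$ blocks, and places no restriction on how many other blocks in $\B_N$ straddle a given range. Since redundancy is unbounded in the model, $\sum_b c_b$ can be as large as $rN$ irrespective of $A$. Even setting this aside, your asserted single-scale consequence ``$A+u=\Omega(B)$'' is actually false --- the buffered B-trees in Table~\ref{tab:results} achieve $A,u=O(\log(N/M))$, which is $O(\log B)$ when $N/M=B^{O(1)}$ --- so whatever inequality you had in mind cannot be correct. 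The underlying reason is that you examine only the final snapshot $\S_N$, whereas the tradeoff arises because the access-overhead bound must hold at \emph{every} intermediate snapshot; a one-shot count at the end cannot see this. The multi-scale patch you sketch is exactly where the real work would lie, and you yourself flag the obstacle (one creation aligns a block at all scales simultaneously) without resolving it; the appeal to waiting times under a random permutation is not a proof.

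The paper proceeds by reduction rather than by a direct counting argument. It inserts $2MB$ groups of $B$ points each, in $B$ rounds (one point per group per round), and by two averaging steps isolates a single group that (i) is never touched by the buffer $\M$ and (ii) carries only $O(uB)$ element-level transition cost across the $B$ rounds. The evolving $A$-block cover of that one group, snapshot by snapshot, is then literally a solution to a combinatorial ``ball-shuffling'' problem with $n=B$ balls, $t=A$ bins, and total shuffle cost $O(uB)$; Theorem~\ref{thm:dynam-index} follows from a separate lower bound (Theorem~\ref{thm:shuffling}) on ball-shuffling, proved by induction on $t$. The $\log B$ you are trying to manufacture from a hierarchy of query sizes instead emerges as the depth of that induction, so the ``free-riding across scales'' obstacle never arises.
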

Note that this lower bound does not depend on the redundancy $r$, meaning
that the index cannot do better by consuming more space.  Interestingly,
our result shows that although the indexability model is basically
meaningless for static 1D range queries, it gives nontrivial and almost
tight lower bound when dynamization is considered.  

To prove Theorem~\ref{thm:dynam-index}, below we first define a {\em
  ball-shuffling} problem and show that any dynamic indexing scheme for
1D range queries yields a solution to the ball-shuffling problem.  Then we
prove a lower bound for the latter.

\section{The Ball-Shuffling Problem and the Reduction}
\label{sec:ball-shuffl-probl}
We now define the {\em ball-shuffling} problem, and present a lower bound
for it.  There are $n$ balls and $t$ bins, $b_1, \dots, b_t$.  The balls
come one by one.  Upon the arrival of each ball, we need to find some bin
$b_i$ to put it in.  Abusing notations, we use also $b_i$ to denote the
current size of the bin, i.e., the number of balls inside.  The {\em cost}
of putting the ball into $b_i$ is defined to be $b_i+1$.  Instead of
directly putting a ball into a bin, we can do so with {\em shuffling}: We
first collect all the balls from one or more bins, add the new ball to the
collection, and then arbitrarily allocate these balls into a number of
empty bins.  The cost of this operation is the total number of balls
involved, i.e., if $I$ denotes the set of indices of the bins collected,
the cost is $\sum_{i \in I}b_i + 1$.  Note that directly putting a ball
into a bin can be seen as a special shuffle, where we collect balls from
only one bin and allocate the balls back to one bin.

Our main result for the ball-shuffling problem is the following lower
bound, whose proof is deferred to Section~\ref{sec:proof}.

\begin{theorem}
\label{thm:shuffling}
The cost of any algorithm for the ball-shuffling problem is at least (i)
$\Omega(n\log_t n)$ for any $t$; and (ii) $\Omega(tn^{1+\Omega(1/t)})$ for
$t <\alpha\ln n$ where $\alpha$ is an arbitrary constant.
\end{theorem}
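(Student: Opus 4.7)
The plan is to lower-bound the total cost by writing it as $\sum_b m_b$, where $m_b$ counts the number of operations involving ball $b$ (a direct insertion into a bin of size $b_i$ counts as involving all $b_i+1$ balls of the bin). Both claimed bounds then amount to lower bounds on the average of $m_b$, and the natural approach is an amortized analysis.

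For part (i), I would partition each bin's timeline into \emph{epochs} delimited by the shuffles that reset that bin; within an epoch, the bin grows only by direct insertions. If an epoch starts at size $s_0$ and gains $\Delta$ balls via direct insertion, the accumulated direct-insert cost is $\sum_{k=s_0+1}^{s_0+\Delta} k \ge s_0\Delta + \Delta^2/2$, and the shuffle terminating the epoch additionally costs at least the bin's current size. Summing across the epochs of all $t$ bins and balancing the shuffle cost against the direct-insert cost should force the bin-size profile over time to follow a roughly geometric progression of ratio $t$, which is exactly what yields the $\log_t n$ factor.

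For part (ii), I would use a recursive ``build-up'' argument. Let $C(n,t)$ denote the minimum cost to process $n$ balls with $t$ bins. Conditioning on the first moment the algorithm creates a bin of size at least $n^{1/t}$, either it has already paid $\Omega(n^{1+1/t})$ in direct-insert buildup, or it incurs a shuffle cost that reduces the remaining task to a $(t-1)$-bin instance on roughly $n/\poly(t)$ balls. Unrolling a recurrence of the form $C(n,t)\ge \Omega(n)+C(n/\poly(t),\,t-1)$ then yields $C(n,t)=\Omega(tn^{1+\Omega(1/t)})$ in the regime $t<\alpha\ln n$.

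The main obstacle, in both parts, is the tremendous flexibility shuffles afford: after collecting balls from some bins, the algorithm can redistribute them arbitrarily into empty bins, including concentrating them all into a single bin. This makes naive convex potentials such as $\sum_i b_i^{1+\epsilon}$ ineffective, since a single concentrating shuffle of cost $c$ can inflate such a potential by as much as $c^{1+\epsilon}\gg c$. The technical heart of the proof must therefore be a potential or accounting scheme that is insensitive to the concentrating direction---either by tracking bin epochs explicitly as above, or by imposing a per-ball cap on how much a single operation can contribute to the potential.
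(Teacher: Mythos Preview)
Your plan for part (i) is in the right spirit but underspecified; the paper takes a cleaner route that avoids potentials altogether. Instead of epoch accounting, it proves by induction on the average per-ball cost $u$ that with $t$ bins and average cost $u$ one can handle fewer than $(2t)^{2u}$ balls. The inductive step divides the input into $2t$ batches of $(2t)^{2u}$ balls each and observes that a batch containing some ball that is never subsequently shuffled permanently locks the bin holding that ball; hence at most $t$ batches can be ``bad'', and each of the remaining $\ge t$ ``good'' batches contributes at least $(2t)^{2u}$ extra cost from later shuffling. This sidesteps the concentrating-shuffle obstacle you identified without any potential function.

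Your plan for part (ii) has a genuine gap. First, the dichotomy at the moment a bin first reaches size $n^{1/t}$ does not give what you claim: building a bin to size $n^{1/t}$ by direct insertions costs $\Theta(n^{2/t})$, not $\Omega(n^{1+1/t})$, and reaching that size via a shuffle does not retire the bin, since the algorithm can immediately shuffle again and reuse the slot, so no reduction to $t-1$ bins is justified. Second, even granting a recurrence of the shape $C(n,t)\ge \Omega(n)+C(n/\poly(t),\,t-1)$, unrolling it sums a geometric series in $n$ and yields only $O(n)$ (or at best $O(tn)$), never $n^{1+\Omega(1/t)}$. The paper's approach is structurally different: it first proves, via an exchange argument, that some optimal algorithm uses only \emph{merging} shuffles (each shuffle dumps all collected balls into a single bin). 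This is exactly the device that neutralizes the concentrating-shuffle obstacle you flagged. With merging shuffles, one can designate a fixed bin $b_1$ and argue that between consecutive touches of $b_1$ the remaining $t$ bins must absorb all new balls; if the $i$-th touch of $b_1$ adds $x_i$ balls, this gives
\[
f_{t+1}(n)\ \ge\ \min_{k,\ \sum x_i=n}\ \sum_{i=1}^k \Bigl(f_t(x_i)+(k-i+1-\tfrac{1}{t})x_i\Bigr)\ -\ 2n,
\]
and the $n^{1+c/t}$ exponent emerges from solving this recurrence by a careful two-level induction. The reduction in $t$ you were seeking is there, but it comes from isolating one bin in a merging-only solution, not from a size threshold.
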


\paragraph{The reduction.}
Suppose there is a dynamic indexing scheme $\mathbb{S} = (\S_1,\dots,\S_N)$
for dynamic one-dimensional range queries with update cost $u$ and access
overhead $A$.  Assuming $N\ge 2MB^2$, we will show how this leads to a
solution to the ball-shuffling problem on $n=B$ balls and $t=A$ bins with
cost $O(uB)$.  This will immediately translate the tradeoff in
Theorem~\ref{thm:shuffling} to the desired tradeoff in
Theorem~\ref{thm:dynam-index}.

We divide these $N$ points into subsets of $2MB^2$.  We will use a separate
construction for each subset of points.  Since the amortized cost for
handling every $B$ insertions of points is $u$, at least one of the subsets
has a total transition cost of at most $O(u MB)$.  Let us consider one such
subset of $N'=2MB^2$ points.

We construct a dynamic workload of $N'$ points as follows.  The points are
divided into $2MB$ groups of $B$ each.  The coordinates of all points in
the $j$-th group are in the range of $(j,j+1)$ and distinct.  We perform
the insertions in $B$ rounds; in each round, we simply add one point to
each group.  The dynamic indexing scheme $\mathbb{S}$ correspondingly has
$N'$ snapshots $\S_1=(W_1, \B_1, \M_1),\dots,\S_{N'} =(W_{N'}, \B_{N'},
\M_{N'})$.  We will only consider the subsequence $\mathbb{S}'$ consisting
of the snapshots $\S_{2MB}, \S_{2\cdot 2MB},\dots, \S_{N'}$, i.e., the ones
after every round.  The total transition cost of this subsequence is
obviously no higher than that of the entire sequence.  Recall that the
transition cost from a snapshot $\S=(W,\B,\M)$ to its succeeding snapshot
$\S'=(W',\B',\M')$ is the number of blocks that are different in $\B$ and
$\B'$.  We now define the {\em element transition cost} to be the number of
elements in these different blocks, more precisely, $|\{x \mid x \in b, b
\in (\B-\B')\cup(\B'-\B)\}|$.  Since each block contains at most $B$
elements, the element transition cost is at most a factor $O(B)$ larger
than the transition cost.  Thus, $\mathbb{S}'$ has an element transition
cost of $O(uMB^2)$.  The element transition cost can be associated with the
elements involved, that is, it is the total number of times that an element
has been in an updated block, summed over all elements.

If a group $G$ has at least one point in some $\M_i$ in $\mathbb{S}'$, then
it is said to be {\em contaminated}.  Since $\sum_{i=1}^B |\M_{i\cdot 2MB}|
\le MB$, at most $MB$ groups are contaminated.  Since the total element
transition cost of $\mathbb{S}'$ is $O(uMB^2)$, among the at least $MB$
uncontaminated groups, at least one has an element transition cost of
$O(uB)$.  Focusing on such a group, and let $G_1,\dots,G_B$ be the
snapshots of this group after every round.  Since this group is
uncontaminated, all points in $G_i$ must be completely covered by
$\B_{i\cdot 2MB}$ for all $i=1,\dots, B$.  Since $G_i$ has at most $B$
points and $\mathbb{S}$ has access overhead $A$, $G_i$ should always be
covered by at most $A$ blocks in $\B_{i\cdot 2MB}$.  For each $i$, let
$b_{i,1}, \dots, b_{i,A}$ be the blocks of $\B_{i\cdot 2MB}$ that cover
$G_i$, let $\hat{b}_{i,j} = b_{i,j} \cap G_i, j=1,\dots,A$.  Note that
these $\hat{b}_{i,j}$ may overlap and some of them may be empty.  Let
$\hat{\B}_i = \{\hat{b}_{i,1}, \dots, \hat{b}_{i,A}\}$.  Consider the
transition from $\hat{\B}_i$ to $\hat{\B}_{i+1}$.  We can as before define
its element transition cost as $|\{x \mid x \in b, b \in
(\hat{\B}_i-\hat{\B}_{i+1})\cup(\hat{\B}_{i+1}-\hat{\B}_i)\}|$.  This
element transition cost cannot be higher than that from $\B_{i\cdot 2MB}$
to $\B_{(i+1)\cdot 2MB}$ only counting the elements of $G_{i+1}$, because
$\hat{b}_{i,j} \ne \hat{b}_{i,j'}$ only if $b_{i,j} \ne b_{i,j'}$.
Therefore, the total element transition cost of the sequence $\hat{\B}_1,
\dots, \hat{\B}_B$ is at most $O(uB)$.

Now we claim that the sequence $\hat{\B}_1, \dots, \hat{\B}_B$ gives us a
solution for the ball-shuffling problem of $B$ balls and $A$ bins with cost
at most its element transition cost.  To see this, just treat each
set in $\hat{\B}_i$ as a bin in the ball-shuffling problem.  To add the
$(i+1)$-th ball, we shuffle the bins in $\hat{\B}_i-\hat{\B}_{i+1}$ and
allocate the balls according to the sizes of the sets in $\hat{\B}_{i+1} -
\hat{\B}_i$.  An element may have copies in $\hat{\B}_{i+1}$, so there
could be more elements than balls in $\hat{\B}_{i+1} - \hat{\B}_i$.  But
this is all right, we can still allocate balls according to $\hat{\B}_{i+1}
- \hat{\B}_i$, while just making sure that each bin has no more balls than
their corresponding set in $\hat{\B}_{i+1}$.  This way, we can ensure that
the cost of each shuffle is always no more than the element transition cost
of each transition.  Therefore, we obtain a solution to the ball-shuffling
problem with cost $O(uB)$.  This completes the  reduction.

\section{Proof of Theorem~\ref{thm:shuffling}}
\label{sec:proof}
\paragraph{Proof of part (i).} We first prove part (i) of the theorem.  We
will take an indirect approach, proving that any algorithm that handles the
balls with an average cost of $u$ using $t$ bins cannot accommodate
$(2t)^{2u}$ balls or more.  This means that $n < (2t)^{2u}$, or $u >
\frac{\log n}{2\log(2t)}$, so the total cost of the algorithm is $un =
\Omega(n\log_t n)$.

We prove so by induction on $u$.  When $u=1$, clearly the algorithm has to
put every ball into an empty bin, so with $t$ bins, the algorithm can
handle at most $t < (2t)^2$ balls.  We will use a step size of
$\frac{1}{2}$ for the induction, i.e., we will assume that the claim is
true for $u$, and show that it is also true for $u+\frac{1}{2}$. (Thus our
proof works for any $u$ that is a multiple of $\frac{1}{2}$; for other
values of $u$, the lower bound becomes $(2t)^{\lceil 2u \rceil}$, which
does not affect our asymptotic result.)  Equivalently we need to show that
to handle $(2t)^{2u+1}$ balls, any algorithm using $t$ bins has to pay an
average cost of more than $u+\frac{1}{2}$ per ball, or
$(u+\frac{1}{2})(2t)^{2u+1} = (2tu+t)(2t)^{2u}$ in total.  We divide the
$(2t)^{2u+1}$ balls into $2t$ batches of $(2t)^{2u}$ each.  By the
induction hypothesis, to handle the first batch, the algorithm has to pay a
total cost of more than $u(2t)^{2u}$.  For each of the remaining batches,
the cost is also more than $u(2t)^{2u}$, plus the cost of shuffling the
existing balls from previous batches.  This amounts to a total cost of
$2tu(2t)^{2u}$, and we only need to show that shuffling the balls from
previous batches costs at least $t(2t)^{2u}$ in total.

If a batch has at least one ball that is never shuffled in later batches,
it is said to be a {\em bad batch}, otherwise it is a {\em good batch}.
The claim is that at most $t$ of these $2t$ batches are bad.  Indeed, since
each bad batch has at least one ball that is never shuffled later, the bin
that this ball resides in cannot be touched any more.  So each bad batch
takes away at least one bin from later batches and there are only $t$ bins.
Therefore there are at least $t$ good batches, in each of which all the
$(2t)^{2u}$ ball have been shuffled later.  This costs at least
$t(2t)^{2u}$, and the proof completes.

\paragraph{The merging lemma.}
Part (i) of the theorem is very loose for small values of $t$.  If $t\le
\alpha \log n$ where $\alpha$ is an arbitrary constant, we can prove a much
higher lower bound, which later will lead to the most interesting branch in
the query-update tradeoff \eqref{eq:1} of range queries.  The rest of this
section is devoted to the proof of part (ii) of
Theorem~\ref{thm:shuffling}, and it requires a much more careful and direct
analysis.

We first prove the following lemma, which restricts the way how the optimal
algorithm might do shuffling.  We call a shuffle that allocates balls back
to more than one bin a {\em splitting shuffle}, otherwise it is a {\em
  merging} shuffle.

\begin{lemma}
\label{lem:merge}
There is an optimal algorithm that only uses merging shuffles.
\end{lemma}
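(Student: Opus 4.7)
My plan is an exchange argument: show that the \emph{last} splitting shuffle in any optimal algorithm can be replaced by a merging shuffle without increasing the total cost, and then iterate until no splits remain.

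Let $A$ be optimal and let $\sigma$ be its last splitting shuffle, at time $t$. Suppose $\sigma$ collects the bins in some set $I$ containing $s$ balls in total, adds the new ball, and allocates the resulting $s+1$ balls (which I call the \emph{phantoms}) to output bins $j_1,\ldots,j_k$ of sizes $c_1,\ldots,c_k$, with $k\ge 2$. Since every shuffle after $\sigma$ is a merge, the phantoms can only be joined with other balls from now on, never split apart. I build $A'$ by replacing $\sigma$ with a merging shuffle $\sigma'$ that collects the same $I$, adds the new ball, and puts all $s+1$ phantoms into a single bin $j^*$, leaving the other $j_q$'s empty; the cost of $\sigma'$ equals that of $\sigma$. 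For each operation of $A$ at time $\tau>t$, $A'$ uses the same collected bins and the same destination, but the phantom content of each collected bin is effectively erased, since in $A'$ the phantoms remain locked in $j^*$ rather than sitting in the various descendant bins of the $j_q$'s. Each post-$t$ operation of $A'$ then has cost exactly the corresponding $A$-cost minus the number of phantoms that $A$ processes in that step.

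Summing over all post-$t$ operations gives $\mathrm{cost}(A')=\mathrm{cost}(A)-\sum_q c_q L_q\le\mathrm{cost}(A)$, where $L_q\ge 0$ is the number of post-$t$ operations of $A$ that touch the phantoms originating from $j_q$. Since $\sigma'$ is a merge and the remaining operations of $A'$ inherit their type from $A$, the number of splitting shuffles has strictly decreased, so iterating terminates in an optimal merging-only algorithm.

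The main technical obstacle is feasibility of $A'$: every destination of a merging shuffle must be an empty bin, but $j^*$ is occupied by phantoms throughout. I expect to resolve this by choosing $j^*$ among $\{j_1,\ldots,j_k\}$ to be a bin that $A$ never uses as a destination after $t$; all other destinations of $A$ are empty in $A'$ as well, since removing phantoms can only shrink a bin's contents. In the remaining corner case where no such $j^*$ exists, the phantoms can be migrated by being collected along with the conflicting operation's inputs and remerged into that operation's destination, preserving the merging-only structure while a careful amortized accounting confirms that the total cost still does not grow. This feasibility bookkeeping is the only wrinkle; the cost calculation above is the core of the argument.
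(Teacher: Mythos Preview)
Your cost accounting is where the argument breaks. You assert that each post-$t$ operation of $A'$ costs the corresponding $A$-cost minus the number of phantoms $A$ processes, giving $\mathrm{cost}(A') = \mathrm{cost}(A) - \sum_q c_q L_q$. That would require $A'$ never to touch a phantom. But you also stipulate that $A'$ ``uses the same collected bins and the same destination'' as $A$; so the first time $A$ collects $j^*$, $A'$ does too---and in $A'$ the bin $j^*$ holds all $s+1$ phantoms, not just the $c_{q^*}$ that $A$ left there. Thereafter the $A'$-phantoms travel together along exactly the trajectory of $A$'s $j_{q^*}$-phantoms (same collections, same destinations), so they are touched $L_{q^*}$ times, and the honest balance is
\[
\mathrm{cost}(A') \;=\; \mathrm{cost}(A) \;-\; \sum_q c_q L_q \;+\; (s+1)\,L_{q^*},
\]
which can exceed $\mathrm{cost}(A)$: a two-way split with $c_1=c_2=1$ followed by $A$ touching $j_1$ twice and $j_2$ once already makes $j^*=j_1$ strictly worse. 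Note that your description is internally inconsistent: ``same collected bins'' and ``phantoms remain locked in $j^*$'' cannot both hold once $A$ collects $j^*$. Conversely, your feasibility worry about $j^*$ as a destination is a non-issue under ``same collected bins'', since $A$ must empty $j^*$ before reusing it as a destination, and then $A'$ has emptied it as well.

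The paper avoids this by being less aggressive: it merges only two of the $k$ output bins, reducing the splitting number from $k$ to $k-1$, and then looks ahead. If $A$ touches $b_1$ a total of $a_1$ times and $b_2$ a total of $a_2$ times before (if ever) reuniting them, the overloaded $A'$-bin is mapped to whichever of $b_1,b_2$ is touched \emph{less} and the emptied $A'$-bin to the other, so the extra $y\cdot\min(a_1,a_2)$ is dominated by the savings $y\cdot\max(a_1,a_2)$. Iterating on the lexicographic sequence of splitting numbers terminates. Your all-at-once collapse can in fact be repaired by the same lookahead idea---choose $q^* = \arg\min_q L_q$, so that $(s+1)L_{q^*} = (\sum_q c_q)L_{q^*} \le \sum_q c_q L_q$---but that choice is absent from your proposal, and the hand-waved ``careful amortized accounting'' in your corner case does not supply it.
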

\begin{proof}
  For a shuffle, we call the number of bins that receive balls from the
  shuffle its {\em splitting number}.  A splitting shuffle has a splitting
  number at least 2, and a merging shuffle's splitting number is 1.  For an
  algorithm $\A$, let $\pi(\A)$ be the sequence of the splitting numbers of
  all the $n$ shuffles performed by $\A$.  Below we will show how to
  transform $\A$ into another algorithm $\A'$ whose cost is no higher than
  that of $\A$, while $\pi(\A')$ is lexicographically smaller than
  $\pi(\A)$.  Since every splitting number is between 1 and $t$, after a
  finite number of such transformations, we will arrive at an algorithm
  whose splitting numbers are all 1, hence proving the lemma.
  
  Let $\A$ be an algorithm that uses at least one splitting shuffle, and
  consider the last splitting shuffle carried out by $\A$.  Suppose it
  allocates  balls to $k$ bins.  $\A'$ will do the same as $\A$ up until
  its last splitting shuffle, which $\A'$ will change to the following
  shuffle.  $\A'$ will collect balls from the same bins but will only
  allocate them to $k-1$ bins.  Among the $k-1$ bins, $k-2$ of them receive
  the same number of balls as in $\A$, while the last bin receives all the
  balls in the last two bins used in $\A$.  Observe that since the bins are
  indistinguishable, the current status of the bins is only determined by
  their sizes.  So the only difference between $\A$ and $\A'$ after this
  shuffle is two bins, say $b_1, b_2$ of $\A$ and $b'_1, b'_2$ of $\A'$.
  Note that the cost of this shuffle is the same for both $\A$ and $\A'$.
  After this shuffle, suppose we have $b_1= x, b_2=y, b'_1 = x+y, b'_2 = 0$
  for some $x, y \ge 1$.  Clearly, no matter what $\A'$ does in the future,
  we always have $\pi(\A')$ lexicographically smaller than $\pi(\A)$.
  
  From now on $\A'$ will mimic what $\A$ does with no higher cost.  We will
  look ahead at the operations that $\A$ does with $b_1$ and $b_2$, and
  decide the corresponding actions of $\A'$.  Note that $\A$ will do no
  more splitting shuffles.  Consider all the shuffles that $\A$ does until
  it merges $b_1$ and $b_2$ together, or until the end if $\A$ never does
  so.  For those shuffles that touch neither $b_1$ nor $b_2$, $\A'$ will
  simply do the same.  Each of the rest of the shuffles involves $b_1$ but
  not $b_2$ (resp.\ $b_2$ but not $b_1$).  Since the bins are
  indistinguishable, for any such merging shuffle, we may assume that all
  the balls are put back to $b_1$ (resp.\ $b_2$).  Suppose there are $a_1$
  shuffles involving $b_1$ and $a_2$ shuffles involving $b_2$.  Assume for
  now that $a_1 \le a_2$.  $\A'$ will do the following correspondingly.
  When $\A$ touches $b_1$, $\A'$ will use $b'_1$; and when $\A$ touches
  $b_2$, $\A'$ will use $b'_2$.  Clearly, for any shuffle that involves
  neither $b_1$ nor $b_2$, the cost is the same for $\A$ and $\A'$.  For a
  shuffle that involves $b_1$ but not $b_2$, since before $\A$ merges $b_1$
  and $b_2$, we have the invariant that $b'_1 = b_1 + y$, $\A'$ pays a cost
  of $y$ more than that of $\A$, for each of these $a_1$ shuffles.  For a
  shuffle that involves $b_2$ but not $b_1$, since we have the invariant
  that $b'_2 = b_2 - y$, $\A'$ pays a cost of $y$ less than that of $\A$,
  for each of these $a_2$ shuffles.  So $\A'$ incurs a total cost no more
  than that of $\A$.  In the case $a_1 \ge a_2$, when $\A$ touches $b_1$,
  $\A'$ will use $b'_2$; and when $\A$ touches $b_2$, $\A'$ will use
  $b'_1$.  A similar argument then goes through.  Finally, when $\A$ merges
  $b_1$ and $b_2$ together (if it ever does so), $\A'$ will also shuffle
  both $b'_1$ and $b'_2$.  Since we always have $b_1 + b_2 = b'_1+b'_2$,
  the cost of this shuffle is the same for $\A$ and $\A'$.  After this
  shuffle, $\A$ and $A'$ are in the same status.  Thus we have transformed
  $\A$ into $\A'$ with no higher cost while $\pi(\A')$ is strictly
  lexicographically smaller than $\pi(A)$.  Applying such transformations
  iteratively proves the lemma.
\end{proof}

\paragraph{The recurrence.}
Now we are ready to prove part (ii) of Theorem~\ref{thm:shuffling}.  Our
general approach is by induction on $t$.  Let $f_t(n)$ be the minimum cost
of any algorithm for the ball-shuffling problem with $n$ balls and $t$
bins.  Let $\alpha$ be an arbitrary constant.  The induction process
consists of two phases.  In the first phase, we prove that $f_t(n) \ge c_1
tn^{1+c_2/t} - 2tn$ for all $t \le t_0 =\lfloor c_0 \ln n \rfloor$, where
$c_0, c_1$ and $c_2$ are some small constants to be determined later.  In
phase two, we prove that $f_t(n) \ge c_1 t_0 n^{1+c_2/(t_0+(t-t_0)/\alpha)}
- 2tn$ for all $t_0 \le t\le \alpha \ln n$.  Finally we show how to choose
the constants $c_0, c_1, c_2$ such that $f_t(n)$ is always at least
$\Omega(t n^{1+\Omega(1/t)})$.

The base case of the first phase $t=1$ is easily established, since the
optimal algorithm is simply adding the balls to the only bin one by one,
yielding $f_1(n) = \frac{1}{2}n(n+1) \ge c_1 n^{1+c_2} - 2n$, provided that
we choose $c_1 \le 1/2, c_2 \le 1$.

By Lemma~\ref{lem:merge}, there is an optimal algorithm $\A$ for shuffling
$n$ balls with $t+1$ bins where $\A$ only uses merging shuffles.  Since the
bins are indistinguishable, we may assume w.l.o.g.\ that there is a
designated bin, say $b_1$, such that whenever $b_1$ is shuffled, all the
balls are put back to $b_1$.  Suppose when handling the last ball, we force
$\A$ to shuffle all the balls to $b_1$, which costs $n$.  We will later
subtract this cost since $\A$ may not actually do so in the last step.

Suppose $\A$ carries out a total of $k$ shuffles involving $b_1$ (including
the last enforced shuffle), and with the $i$-th shuffle, $b_1$ increases by
$x_i\ge 1$.  It is clear that $\sum_{i=1}^k x_i= n$.  We claim that the
total cost of $\A$, $f_{t+1}(n)$, is at least
\begin{equation}
\label{eq:2}
f_t(x_1) + f_t(x_2) + \cdots + f_t(x_k) + \left(k-\frac{1}{t}\right)x_1 +
\left(k-1-\frac{1}{t}\right)x_2 + \cdots + \left(1-\frac{1}{t}\right)x_{k}
- 2n.
\end{equation}

Consider the $i$-th shuffle involving $b_1$.  This shuffle brings $x_i$
balls to $b_1$, including the new ball just added in this step.  Let us
lower bound the cost due to these $x_i$ balls.  First, those $x_i-1$ old
balls must not have been in $b_1$ before, since whenever $\A$ shuffles
$b_1$, all the balls will go back to $b_1$.  So $\A$ must have been able to
accommodate them using the other $t$ bins.  This costs at least
$f_t(x_i-1)$, even if ignoring the cost of shuffling the other existing
balls in these $t$ bins.  Then these $x_i-1$ balls, plus a new ball, are
shuffled to $b_1$.  This costs $x_i$, not counting the cost associated with
the existing balls in $b_1$.  Finally, these $x_i$ balls will be in $b_1$
for all of the remaining $k-i$ shuffles involving $b_1$, costing $(k-i)
x_i$.  Thus, we can charge a total cost of
\begin{equation}
\label{eq:3}
f_t(x_i - 1) + x_i +
(k-i)x_i = f_t(x_i-1) + 1 + \frac{x_i}{t} + \left(k-i+1-\frac{1}{t}\right)x_i-1 \ge f_t(x_i) + \left(k-i+1 - \frac{1}{t}\right) x_i -1
\end{equation}
to these $x_i$ balls.  That $f_t(x_i-1) + 1 + x_i/t \ge f_t(x_i)$ easily
follows from the observation that, to handle $x_i$ balls with $t$ bins, we
can always run the optimal algorithm for $x_i-1$ balls with $t$ bins, and
then put the last ball into the smallest bin, which will cost no more than
$1 + (x_i-1)/t < 1+x_i/t$.  Finally, summing (\ref{eq:3}) over for all $i$,
relaxing a $-k$ to $-n$, and subtracting the cost of the enforced shuffle
proves that (\ref{eq:2}) is a lower bound on $f_{t+1}(n)$ for given $k,
x_1, \dots, x_k$.  Thus, $f_{t+1}(n)$ is lower bounded by the minimum of
\eqref{eq:2}, over all possible values of $k, x_1,\dots,x_k$, subject to
$\sum_{i=1}^k x_i = n$.

We first use this recurrence to solve for $f_2(n)$.  
\begin{eqnarray*}
 f_2(n) &\ge& \min_{k, x_1+\cdots+x_k = n} \{ f_1(x_1) + \cdots + f_1(x_k)
 + (k-1)x_1 + \cdots + x_{k-1} - 2n\} \\
&=& \min_{k,x_1+\cdots+x_k =n} \{\frac{1}{2}x_1(x_1+1) + \cdots +
 \frac{1}{2}x_k(x_k+1) + (k-1)x_1 + \cdots + x_{k-1} -2n\} \\
& \ge & \min_{k}\left\{\frac{1}{2} k \left(\frac{n}{k}\right)^2 +
 \frac{1}{2}(k-1)k -2n \right \} \ge \frac{1}{4}n^{4/3} - 2n.
\end{eqnarray*}
So if we choose $c_1\le 1/4$, $c_2 \le 2/3$, we have $f_t(n) \ge c_1 t
n^{1+c_2/t} -2tn$ for $t=2$.

For $t\ge 2$, we relax the recurrence as
\begin{eqnarray}
 f_{t+1}(n) &\ge& \min_{k,x_1+\cdots+x_k = n}\left\{ f_t(x_1)+ \cdots +
  f_t(x_k) + \left(k-\frac{1}{2}\right) x_1 +
  \left(k-1-\frac{1}{2}\right)x_2 + \cdots +\frac{1}{2} x_{k}-2n\right\}
\nonumber \\
\label{eq:4}
&\ge& \min_{k,x_1+\cdots+x_k = n}\{ f_t(x_1)+ \cdots +
  f_t(x_k) + \frac{1}{2}(k x_1 + (k-1)x_2+ \cdots + x_k) - 2n\}. 
\end{eqnarray} 

\paragraph{The induction, phase one.}
In phase one, we have $1\le t \le t_0-1$ for $t_0 = \lfloor c_0 \ln
n\rfloor$.  The base cases $t=1,2$ have already been established.  Assuming
the induction hypothesis $f_t(n) \ge c_1 t n^{1+c_2/t} - 2tn$, we need to
show $f_{t+1}(n) \ge c_1 (t+1) n^{1+c_2/(t+1)} -2(t+1)n$.  From
(\ref{eq:4}) we have
\begin{equation}
\label{eq:5}
f_{t+1}(n) \ge \min_{k, x_1+\cdots +x_k = n}\{ c_1 t x_1^{1+c_2/t} -2tx_1 +
\cdots + c_1 t x_k^{1+c_2/t} - 2tx_k + \frac{1}{2}(k x_1 + \cdots +
x_k)-2n\}. 
\end{equation}
Let $g_{k}(n)$ be the minimum of (\ref{eq:5}) for a given $k$.  Then
clearly $f_{t+1}(n) \ge \min_{1\le k \le n} g_{k}(n)$, and we will show
that 
\begin{equation}
\label{eq:6}
g_k(n) \ge c_1 (t+1)  n^{1+c_2/(t+1)} - 2(t+1)n
\end{equation}
for all $k$, hence completing the induction.

We prove so using another level of induction on $k$.  For the base case
$k=1$, we have $g_1(n) \ge c_1 t n^{1+c_2/t} - 2tn + \frac{1}{2} n -2n \ge
c_1t n^{1+c_2/t} -2(t+1)n$, and $c_1 t n^{1+c_2/t} \ge c_1 (t+1)
n^{1+c_2/(t+1)}$ holds as long as
\[ tn^{\frac{c_2}{t}} \ge (t+1) n^{\frac{c_2}{t+1}} 
\quad \Leftrightarrow \quad
n^{\frac{c_2}{t(t+1)}} \ge 1+\frac{1}{t}
\quad \Leftrightarrow \quad
n^{\frac{c_2}{t+1}} \ge \left(1+\frac{1}{t}\right)^t 
\quad \Leftarrow \quad
n^{\frac{c_2}{t+1}} > e
\quad \Leftrightarrow \quad
t \le c_2 \ln n - 1.
\]
So if we choose $c_0 < c_2$, then for the range of $t$ that we consider in
phase one, (\ref{eq:6}) holds for $k=1$.

Next, assuming that (\ref{eq:6}) holds for $k$, we will show $g_{k+1}(n)
\ge c_1(t+1)n^{1+c_2/(t+1)} - 2(t+1)n$.  By definition,
\begin{eqnarray*}
 g_{k+1}(n) &=& \min_{x_1+\cdots + x_{k+1} = n} \{ c_1 tx_1^{1+c_2/t} -
 2tx_1+  \cdots  + c_1 tx_{k+1}^{1+c_2/t} - 2tx_{k+1} + \frac{1}{2}( (k+1)
 x_1+ \cdots  +  x_{k+1}) -2n\}\\
 &=& \min_{x_{k+1}} \{ c_1t x_{k+1}^{1+c_2/t} - 2tx_{k+1} + \frac{1}{2}n +
 \min_{x_1+\cdots  + x_k = n-x_{k+1}} \{ c_1 tx_1^{1+c_2/t}-2t x_1 + \cdots
 +  c_1  tx_k^{1+c_2/t} -2tx_k\\
&& + \frac{1}{2}( k x_1+ \cdots + x_k) - 2 (n-x_{k+1})\} - 2x_{k+1}\}\\
&=& \min_{x_{k+1}} \{ c_1t x_{k+1}^{1+c_2/t}  -2(t+1)x_{k+1} + \frac{1}{2}n
 + g_k(n-x_{k+1})\}  \\ 
&\ge& \min_{x_{k+1}} \{ c_1t x_{k+1}^{1+c_2/t} -2(t+1)x_{k+1}+ \frac{1}{2}n
 +  c_1(t+1)(n-x_{k+1})^{1+c_2/(t+1)}-2(t+1)(n-x_{k+1})\} \\
&=& \min_{x_{k+1}} \{ c_1t x_{k+1}^{1+c_2/t} + \frac{1}{2}n
 +  c_1(t+1)(n-x_{k+1})^{1+c_2/(t+1)}-2(t+1)n\}. \\
\end{eqnarray*}
Setting $x_{k+1} = \lambda n$ where $0 < \lambda < 1$, we will show
\begin{equation}
\label{eq:7} c_1 t (\lambda n)^{1+c_2/t} +
c_1(t+1)((1-\lambda) n)^{1+c_2/(t+1)} + \frac{1}{2}n
\ge c_1(t+1) n^{1+c_2/(t+1)}
\end{equation}
for all $\lambda$.  (\ref{eq:7}) is equivalent to 
\begin{equation}
\label{eq:8}
 \frac{t}{t+1} \lambda^{1+\frac{c_2}{t}} n^{\frac{c_2}{t(t+1)}} +
(1-\lambda)^{1+\frac{c_2}{t+1}}  + \frac{1}{2c_1(t+1)n^{c_2/(t+1)}}
\ge 1.
\end{equation}
Since $(1-\lambda)^{1+\frac{c_2}{t+1}} \ge (1-\lambda)^{1+\frac{c_2}{t}}$,
to prove (\ref{eq:8}), it suffices to prove
\begin{equation}
\label{eq:9}
 \frac{t}{t+1} n^{\frac{c_2}{t(t+1)}}  \lambda^{1+\frac{c_2}{t}}  +
(1-\lambda)^{1+\frac{c_2}{t}} \ge 1 -
\frac{1}{2c_1(t+1)n^{c_2/(t+1)}}.
\end{equation}

The LHS of (\ref{eq:9}) achieves its only minimum at the point where its
derivative is zero, namely when
\begin{eqnarray}
\nonumber
 \frac{t}{t+1} n^{\frac{c_2}{t(t+1)}} \left(1+\frac{c_2}{t}\right)
 \lambda^{\frac{c_2}{t}} 
 &=& \left(1+\frac{c_2}{t}\right)(1-\lambda)^{\frac{c_2}{t}},\\
\nonumber
\textrm{or} \qquad \left(\frac{t}{t+1}\right)^{t/c_2} n^{1/(t+1)} \lambda
 & =& 1-\lambda,\\ 
\label{eq:10}
\lambda &=& \frac{1}{(\frac{t}{t+1})^{t/c_2} n^{1/(t+1)} + 1}.
\end{eqnarray}
Plugging (\ref{eq:10}) into the LHS of (\ref{eq:9}) while letting 
$\gamma = (\frac{t}{t+1})^{t/c_2} n^{1/(t+1)}$, we get
\[
\frac{\gamma^{c_2/t} +  \gamma^{1+c_2/t}}
{(\gamma+1)^{1+c_2/t}}  = 
\frac{\gamma^{c_2/t}(1+\gamma)} {(\gamma+1)^{1+c_2/t}} = 
\frac{\gamma^{c_2/t}} {(\gamma+1)^{c_2/t}} =
\left(\frac{\gamma}{\gamma+1}\right)^{c_2/t}.
\]
Considering the RHS of (\ref{eq:9}), since $n^{c_2/(t+1)} =
\gamma^{c_2}(\frac{t+1}{t})^t < e \gamma^{c_2}$, we have
\[ 1 - \frac{1}{2c_1(t+1)n^{c_2/(t+1)}} = 1-\frac{1}{2c_1 (t+1)
  \gamma^{c_2}(\frac{t+1}{t})^t} < 1 - \frac{1}{2c_1 e (t+1)\gamma^{c_2}}
<  1 - \frac{1}{4c_1 e t\gamma^{c_2}}.
\]
Thus, to have (\ref{eq:9}), we just need to have 
\begin{eqnarray*}
  \left(\frac{\gamma}{\gamma+1}\right)^{c_2/t}& \ge& 1 - \frac{1}{4c_1 e
  t\gamma^{c_2}}, \\
\textrm{or} \qquad \frac{\gamma}{\gamma+1} &\ge& \left(1 - \frac{1}{4c_1 e
  t\gamma^{c_2}} \right)^{t/c_2} =
\left(1 - \frac{1}{4c_1 e t\gamma^{c_2}}
  \right)^{\frac{4c_1et\gamma^{c_2}}{4c_1 c_2 e\gamma^{c_2}}}  \\
\Leftarrow \qquad \frac{\gamma}{\gamma+1} &\ge & \exp\left(-\frac{1}{4c_1
  c_2 e \gamma^{c_2}}\right) \\
\Leftrightarrow \qquad 
 1+\frac{1}{\gamma} &\le & \exp\left(\frac{1}{4c_1 c_2
  e\gamma^{c_2}}\right) \\ 
\Leftarrow \qquad 
 1+\frac{1}{\gamma} &\le & 1 + \frac{1}{4c_1 c_2 e\gamma^{c_2}},
\end{eqnarray*}
where the last inequality holds if $\gamma \ge 4c_1 c_2 e\gamma^{c_2}$, or
$\gamma \ge (4c_1c_2e)^{1/(1-c_2)}$.  Finally, since 
\[ \gamma =  n^{1/(t+1)} \left/ \left(1+1/t\right)^{t/c_2}\right. >
n^{1/(t+1)} / e^{1/c_2} \ge n^{1/t_0}/e^{1/c_2} \ge e^{1/c_0 - 1/c_2},
\]
as long as we choose $c_0$ small enough depending on $c_1$ and $c_2$, such
that $ e^{1/c_0 - 1/c_2} \ge (4c_1c_2e)^{1/(1-c_2)}$, \eqref{eq:9} will
hold, and henceforth $g_{k+1}(n) \ge c_1(t+1)n^{1+c_2/(t+1)}$.  This
also completes the induction on $t$ for phase one.  Finally, to ensure $c_1
tn^{1+c_2/t} - 2tn =\Omega(tn^{1+\Omega(1/t)})$ for $t\le t_0$, it suffices
to have $c_1 n^{c_2/t_0} = c_1 e^{c_2/c_0}> 2$, which again can be
guaranteed by choosing $c_0$ small enough.

\paragraph{The induction, phase two.}
The derivation for phase two is similar to that of phase one, and is
given in the appendix.  Combining the results of phase one and phase two
we have proved part (ii) of Theorem~\ref{thm:shuffling}.

\paragraph{Tightness of the bounds.}
Ignoring the constants in the Big-Omega, the lower bound of
Theorem~\ref{thm:shuffling} is tight for nearly all values of $t$.  Now we
give some concrete strategies matching the lower bounds For $t
\ge 2\log n$, we use the following shuffling strategy.  Let $x = t/\log n
\ge 2$.  Divide the $t$ bins evenly into $\log_x n$ groups of $t/\log_x n$
each.  We use the first group to accommodate the first $t/\log_x n$ balls.
Then we shuffle these balls to one bin in the second group.  In general,
when all the bins in group $i$ are occupied, we shuffle all the balls in
group $i$ to one bin in group $i+1$.  The total cost of this algorithm is
obviously $n \log_x n$ since each ball has been to $\log_x n$ bins, one
from each group.  To show that this algorithm actually works, we need to
show that all the $n$ balls can be indeed accommodated.  Since the capacity
of each group increases by a factor of $t/\log_x n$, the capacity of the
last group is
\[ \left(\frac{t}{\log_x n}\right)^{\log_x n} =  \left(\frac{x t}{x \log_x
    n}\right)^{\log_x n} = n \left(\frac{t}{x \log_x n}\right)^{\log_x n} =
n \left(\frac{\log n}{\log_x n}\right)^{\log_x n} = n (\log x)^{\log_x n}
\ge n.
\]
Thus, part (i) of Theorem~\ref{thm:shuffling} is tight as long as
$\log(t/\log n) = \Omega(\log t)$, or $t=\Omega(\log^{1+\eps}n)$.

Part (ii) of the theorem concerns with $t=O(\log n)$.  For such a small $t$
we need to deploy a different strategy.  We always put balls one by one to
the first bin $b_1$.  When $b_1$ has collected $n^{1/t}$ balls, we shuffle
all the balls to $b_2$.  Afterward, every time $b_1$ reaches $n^{1/t}$, we
merge all the balls in $b_1$ and $b_2$ and put the balls back to $b_2$.
For $b_2$, every time it has collected $n^{2/t}$ balls from $b_1$, we merge
all the balls with $b_3$.  In general, every time $b_i$ has collected
$n^{i/t}$ balls, we move all the balls to $b_{i+1}$.  Let us compute the
total cost of this strategy.  For each shuffle, we charge its cost to the
destination bin.  Thus, the cost charged to $b_1$ is at most $(n^{1/t})^2
\cdot n^{1-1/t} = n^{1+1/t}$, since for every group of $n^{1/t}$ balls, it
pays a cost of at most $(n^{1/t})^2$ to add them one by one, and there are
$n^{1-1/t}$ such groups.  In general, for any bin $b_i, 1\le i \le t$, the
balls arrive in batches of $n^{(i-1)/t}$, the bin clears itself for every
$n^{1/t}$ such batches.  The cost for each batch is at most $n^{i/t}$, the
maximum size of $b_i$, so the cost for all the $n^{1/t}$ batches before
$b_i$ clears itself is $n^{(i+1)/t}$.  The bin clears itself $n/n^{i/t} =
n^{1-i/t}$ times, so the total cost charged to $b_i$ is $n^{1+1/t}$.
Therefore, the total cost charged to all the bins is $tn^{1+1/t}$.

Combining part (i) and part (ii), our lower bound is thus tight for all $t$
except in the narrow range $\omega(\log n) \le t \le o(\log^{1+\eps}n)$.
And in this range, the gap between the upper and lower bounds is merely
$\Theta(\frac{\log t}{\log(t/\log n)}) = o(\log\log n)$.

%% \section{Open Problems}
%% The obvious open problem is to improve the query-update tradeoff to $A\cdot
%% \log(u/A) = \Omega(\log \nm)$, which would be tight for all values of $N$.
%% It is also interesting to extend our results to higher dimensions.

%% In this paper we have only considered range reporting queries.  B-trees
%% also support look-up queries (a.k.a.\ {\em membership queries}).  But if
%% one wants to optimize for look-up queries, a hash table should be used that
%% supports such queries in $O(1)$ I/Os.  A recent result by Jensen and Pagh
%% \cite{jensen07:_optim} drives down the query and update cost of external
%% hash tables to $1+O(1/\sqrt{B})$ I/Os.  It is an intriguing question to see
%% if it is possible (or impossible) to get the update bound to below 1 I/O
%% amortized by using the main memory buffer while maintaining the $O(1)$-I/O
%% query bound.

\bibliographystyle{abbrv} 
\bibliography{../paper,../io,../geom,../newgeom}

\appendix
\section{The induction, phase two}
In phase two, we will prove that $f_t(n) \ge c_1 t_0
n^{1+c_2/(t_0+c_0(t-t_0)/\alpha)} - 2tn$ for $t_0\le t \le \alpha \ln n$
where $\alpha$ is any given constant.  To simplify notations we define
$h(t) = t_0 + c_0(t-t_0)/\alpha$.  The base case $t=t_0$ for phase two has
already been established from phase one.  Next we assume $f_t(n) \ge c_1
t_0 n^{1+c_2/h(t)} - 2tn$ and will prove that $f_{t+1}(n) \ge c_1 t_0
n^{1+c_2/h(t+1)}-2(t+1)n$.

From the recurrence \eqref{eq:4} and the induction hypothesis, we have
\begin{equation}
\label{eq:11}
f_{t+1}(n) \ge \min_{k, x_1+\cdots +x_k = n}\{ c_1 t_0
x_1^{1+c_2/h(t)} - 2tx_1 +
\cdots + c_1 t_0 x_k^{1+c_2/h(t)} - 2tx_k + \frac{1}{2}(k
x_1 + \cdots + x_k) - 2n\}. 
\end{equation}
Similarly as in phase one, let $g_k(n)$ be the minimum of~\eqref{eq:11} for
a given $k$.  Here we need to show that
\begin{equation}
\label{eq:12}
g_k(n) \ge c_1 t_0  n^{1+c_2/h(t+1)} - 2(t+1)n.
\end{equation}

Again we use induction on $k$ to prove \eqref{eq:12}.  The base case is
easily seen as $g_1(n) = c_1 t_0 n^{1+c_2/h(t)} -2tn +
\frac{1}{2}n -2n > c_1 t_0 n^{1+c_2/h(t+1)}-2(t+1)n$.  Now
suppose \eqref{eq:12} holds for $k$, we will show $g_{k+1}(n) \ge c_1 t_0
n^{1+c_2/h(t+1)} - 2(t+1)n$.  By the induction hypothesis,
we have
\begin{eqnarray*}
 g_{k+1}(n) &=& \min_{x_1+\cdots + x_{k+1} = n} \{ c_1 t_0x_1^{1+c_2/h(t)}
 - 2tx_1 + \cdots  + c_1 t_0x_{k+1}^{1+c_2/h(t)} -
 2tx_{k+1} \\
&& + \frac{1}{2}( (k+1) x_1+ \cdots +  x_{k+1}) - 2n\}\\
 &=& \min_{x_{k+1}} \{ c_1  t_0x_{k+1}^{1+c_2/h(t)} -
 2tx_{k+1} + \frac{1}{2}n + \min_{x_1+\cdots+x_k = n-x_{k+1}}\{c_1 
 t_0x_1^{1+c_2/h(t)} - 2tx_1 +   \\ 
&& \cdots  +c_1 t_0x_k^{1+c_2/h(t)} - 2tx_k 
 + \frac{1}{2}( k x_1+ \cdots +  x_k) - 2(n-x_{k+1})\} - 2x_{k+1}\}\\
&=& \min_{x_{k+1}} \{ c_1t_0 x_{k+1}^{1+c_2/h(t)} -2(t+1)
 x_{k+1}+ 
 \frac{1}{2}n + g_k(n-x_{k+1}) \}  \\  
&\ge& \min_{x_{k+1}} \{ c_1t_0 x_{k+1}^{1+c_2/h(t)} -2(t+1)x_{k+1} +
 \frac{1}{2}n  +  c_1 t_0(n-x_{k+1})^{1+c_2/h(t+1)} -  2(t+1)(n-x_{k+1})\}
 \\  
&=&  \min_{x_{k+1}} \{ c_1t_0 x_{k+1}^{1+c_2/h(t)}  +
 \frac{1}{2}n  +  c_1 t_0(n-x_{k+1})^{1+c_2/h(t+1)} -
 2(t+1)n\}.
\end{eqnarray*}
Setting $x_{k+1} = \lambda n$ where $0 < \lambda < 1$, we will show
\begin{equation}
\label{eq:13} c_1 t_0 (\lambda n)^{1+c_2/h(t)} +
c_1 t_0((1-\lambda) n)^{1+c_2/h(t+1)} + \frac{1}{2}n
\ge c_1 t_0 n^{1+c_2/h(t+1)}
\end{equation}
for all $\lambda$.  (\ref{eq:13}) is equivalent to 
\begin{equation}
\label{eq:14}
 \lambda^{1+c_2/h(t)} n^{\frac{c_2c_0/\alpha}{h(t)h(t+1)}} +
(1-\lambda)^{1+c_2/h(t+1)}  + \frac{1}{2c_1 t_0n^{c_2/h(t+1)}}
\ge 1.
\end{equation}
Since $(1-\lambda)^{1+\frac{c_2}{h(t+1)}} \ge
(1-\lambda)^{1+\frac{c_2}{h(t)}}$, 
to prove (\ref{eq:14}), it suffices to prove
\begin{equation}
\label{eq:15}
  n^{\frac{c_2c_0/\alpha}{h(t)h(t+1)}}  \lambda^{1+\frac{c_2}{h(t)}}  +
(1-\lambda)^{1+\frac{c_2}{h(t)}} \ge 1 -
\frac{1}{2c_1 t_0 n^{c_2/h(t+1)}}.
\end{equation}

The LHS of (\ref{eq:15}) achieves its only minimum when
\begin{eqnarray}
\nonumber
 n^{\frac{c_2c_0/\alpha}{h(t)h(t+1)}} \left(1+\frac{c_2}{h(t)}\right)
 \lambda^{\frac{c_2}{h(t)}} 
 &=& \left(1+\frac{c_2}{h(t)}\right)(1-\lambda)^{\frac{c_2}{h(t)}},\\
\nonumber
\textrm{or} \qquad  n^{\frac{c_0/\alpha}{h(t+1)}} \lambda
 & =& 1-\lambda,\\ 
\label{eq:16}
\lambda &=& \frac{1}{n^{\frac{c_0/\alpha}{h(t+1)}} + 1}.
\end{eqnarray}
Plugging (\ref{eq:16}) into (\ref{eq:15}) while letting $\gamma =
n^{\frac{c_0/\alpha}{h(t+1)}}$, \eqref{eq:15} becomes
\begin{eqnarray*}
  \left(\frac{\gamma}{\gamma+1}\right)^{c_2/h(t)}& \ge&  1-\frac{1}{2c_1
    t_0 \gamma^{c_2\alpha/c_0}}, \\ 
\textrm{or} \qquad \frac{\gamma}{\gamma+1} &\ge& \left(1 - \frac{1}{2c_1
    t_0\gamma^{c_2\alpha/c_0}} \right)^{h(t)/c_2} = 
\left(1 - \frac{1}{2c_1 t_0\gamma^{c_2\alpha/c_0}}
  \right)^{\frac{2c_1t_0\gamma^{c_2\alpha/c_0}h(t)}{2c_1 t_0
    \gamma^{c_2\alpha/c_0}c_2}}  \\ 
\Leftarrow \qquad  \frac{\gamma}{\gamma+1} &\ge&
    \exp\left(-\frac{h(t)}{2c_1c_2  t_0 \gamma^{c_2\alpha/c_0}}\right) \\
\Leftarrow \qquad  \frac{\gamma}{\gamma+1} &\ge&
    \exp\left(-\frac{1}{2c_1c_2 \gamma^{c_2\alpha/c_0}}\right)
    \\ 
\Leftarrow \qquad  1+ \frac{1}{\gamma} &\le&
    1+ \frac{1}{2c_1c_2 \gamma^{c_2\alpha/c_0}}, 
\end{eqnarray*}
where the last inequality holds if $\gamma \ge 2c_1 c_2
\gamma^{c_2\alpha/c_0}$.  We will choose $c_2,c_0$ such that $c_2\alpha/c_0
> 1$, thus this becomes $\gamma \le
(\frac{1}{2c_1c_2})^{\frac{1}{c_2\alpha/c_0-1}}$.  Since $\gamma =
n^{\frac{c_0/\alpha}{h(t+1)}} < n^{\frac{c_0/\alpha}{c_0\ln n}} =
e^{1/\alpha}$, we just need to have $e^{1/\alpha} \le
(\frac{1}{2c_1c_2})^{\frac{1}{c_2\alpha/c_0-1}}$ to make sure that
\eqref{eq:15} holds.  This would also complete the induction on $t$ for
phase two.

We also need to ensure that $c_1t_0 n^{1+c_2/h(t)} - 2tn \ge c_1 c_0/\alpha
\cdot t n^{1+c_2/h(t)}= \Omega(t n^{1+\Omega(1/t)})$ for phase two.  This
just requires $c_1c_0/\alpha \cdot n^{c_2/h(t)} > 2$.  Since $c_1c_0/\alpha
\cdot n^{c_2/h(t)} \ge c_1c_0/\alpha \cdot
n^{\frac{c_2}{(2c_0-c_0^2/\alpha)\ln n}} = c_1c_0/\alpha \cdot
e^{\frac{c_2}{2c_0-c_0^2/\alpha}}$, we just require $c_1c_0/\alpha \cdot
e^{\frac{c_2}{2c_0-c_0^2/\alpha}} > 2$.

\smallskip
Finally, we put together all the constraints that we have on the constants:
\[\left\{\begin{array}{l}
c_1 \le 1/2, c_2 \le 1, \\
c_1 \le 1/4, c_2 \le 2/3, \\
c_0 < c_2, \\
(4c_1c_2e)^{1/(1-c_2)} \le e^{1/c_0 - 1/c_2}, \\
2 < c_1 e^{c_2/c_0}, \\
e^{1/\alpha} \le (\frac{1}{2c_1c_2})^{\frac{1}{c_2\alpha/c_0-1}}, \\
2 < c_1c_0/\alpha \cdot e^{\frac{c_2}{2c_0-c_0^2/\alpha}}. \\
\end{array}\right.\]
We can first fix $c_1=c_2= 1/4$.  This makes $(4c_1c_2e)^{1/(1-c_2)} < 1$.
Then we choose $c_0$ small enough such that the third and the fifth
constraints are satisfied.  That $c_0 < c_2$ also makes $e^{1/c_0 - 1/c_2}
\ge 1$, satisfying the fourth constraint.  Finally, we will make $c_0$ even
smaller if necessary (depending on $\alpha$), to satisfy the last two
constraints.  This completes the proof of part (ii) of
Theorem~\ref{thm:shuffling}.

\end{document}